\newtheorem{lemma}{Lemma}
\newcommand{\hvec}[1]{\ensuremath{\Hat{\vec{#1}}}}
\newcommand{\bvec}[1]{\ensuremath{\Breve{\vec{#1}}}}
\renewcommand{\vec}[1]{\ensuremath{\vec{#1}}}
\newcommand{\norm}[1]{\ensuremath{\| #1 \|}}
\newcommand{\mc}[1]{\ensuremath{\mathcal{#1}}}
\newcommand{\of}[1]{^{(#1)}}
\newcommand{\floor}[1]{\lfloor #1 \rfloor}
\DeclareMathOperator*{\argmax}{arg\,max}
\renewcommand{\eqref}[1]{(\ref{eq:#1})}
\newcommand{\figref}[1]{Fig.~\ref{fig:#1}}
\newcommand{\secref}[1]{Section~\ref{sec:#1}}
\newcommand{\lemref}[1]{Lemma~\ref{lem:#1}}
\renewcommand{\vec}[1]{\ensuremath{\mathbf{#1}}}
\begin{document}
\title{Communications using Sparse Signals}

\author{
\IEEEauthorblockN{Madhusudan Kumar Sinha\IEEEauthorrefmark{1}, Arun Pachai Kannu\IEEEauthorrefmark{2}} \\
\IEEEauthorblockA{Department of Electrical Engineering\\
Indian Institute of Technology, Madras\\
Chennai, Tamil Nadu - 600036\\
Email: \IEEEauthorrefmark{1}ee16d028@smail.iitm.ac.in, \IEEEauthorrefmark{2}arunpachai@ee.iitm.ac.in }
}

\maketitle

\begin{abstract}

Inspired by compressive sensing principles, we propose novel  error control coding techniques for communication systems. The information bits are encoded in 
the support and the non-zero entries of a sparse signal. By selecting a dictionary matrix with suitable dimensions, the codeword for transmission is obtained by multiplying the dictionary matrix with the sparse signal. Specifically, the codewords  are obtained from the sparse linear combinations of the columns of the 
dictionary matrix. At the decoder, we employ variations of greedy sparse signal recovery algorithms. 
Using Gold code sequences and mutually unbiased bases from quantum information theory as dictionary matrices, we study the block error rate (BLER) performance of the proposed 
scheme in the AWGN channel. Our results show that the proposed scheme has a comparable and competitive performance with respect to the several widely used linear 
codes, for very small to moderate block lengths. In addition, our coding scheme extends straightforwardly to multi-user scenarios such as, multiple access channel, 
broadcast channel and interference channel. In these multi-user channels, if the users are grouped such that they have similar channel gains and noise levels, 
the overall BLER performance of our proposed scheme will coincide with an equivalent single user scenario. 

\end{abstract}

\begin{IEEEkeywords}
sparse signal recovery,  error control coding, mutually unbiased bases, Gold codes, multi-user communications
\end{IEEEkeywords}


%
\IEEEpeerreviewmaketitle

\section{Introduction}

Shannon's seminal paper on information theory established the existence of information encoding and decoding techniques which guarantee almost error free
communications across noisy channels \cite{shannon1948mathematical}. Extensive work has been carried out to develop such efficient error control 
coding techniques for additive white Gaussian noise (AWGN) channels \cite{lin2001error}. 
Linear codes such as convolutional codes, turbo codes and LDPC codes 
are widely used in various communication systems today. Maximum likelihood decoding in AWGN channels boils down to finding 
the codeword in the codebook which is closest to the received signal \cite{madhow2008fundamentals}. Hence the distance properties of the codewords in the 
codebook play an important role in the error performance of the coding scheme. 

Code division multiple access (CDMA) is a communication technique developed for multi-user wireless systems \cite{verdu1998multiuser}. 
Each user is given a specific 
code or a sequence from a large set of sequences. Users multiply their information bearing symbols with the sequences assigned to them in a 
spreading operation.  The received signal is superposition of signals from all the users. When the receiver does despreading operation on the received signal using the sequence of a given user (which is equivalent to finding inner product), the interference from 
other users are suppressed to a large extent, if the correlation between the sequences is small. There is extensive literature on finding a large set of 
codes/sequences with good correlation properties. For instance, Gold codes \cite{Goldcode} and Zadoff-Chu sequences \cite{frank1963polyphase,chu1972polyphase} are well-known for 
their correlation properties and widely used in wireless systems. In addition, quantum information theory also provides ways to construct a large set of 
sequences with small correlation among them. Such constructions are referred as mutually unbiased bases (MUB) \cite{wootters1989optimal}
and symmetric, informationally complete, positive operator valued measure (SIC-POVM) 
\cite{renes2004symmetric}. SIC-POVM is closely related to construction 
of equi-angular lines.

Compressive sensing techniques address the problem of recovering a sparse signal from an under-determined system of noisy linear measurements \cite{eldar2012compressed}.
Suppose $\vec{x}$ is an $L$-dimensional signal/vector with sparsity level $K$ such that only $K$ entries in $\vec{x}$ are non-zero. Using the 
sensing matrix $\vec{A}$ having size $N \times L$ with $K < N < L$, and the noisy linear measurement vector $\vec{y} = \vec{A} \vec{x} + \vec{v}$, the goal is to recover the support and non-zero entries of the $K$-sparse signal $\vec{x}$ from $\vec{y}$. There is extensive literature on the sparse signal recovery algorithms, 
such as greedy matching pursuit based algorithms  \cite{mallat1993matching,pati1993orthogonal},
 convex programming
based algorithms \cite{chen2001atomic,candes2007dantzig}, approximate message passing algorithms \cite{maleki2011approximate} and their deep networks based implementations \cite{metzler2017learned}. 
The performance of the sparse signal recovery algorithms depend on the mutual coherence parameter of the sensing matrix \cite{candes2007sparsity}, which is directly related
to the correlation among its columns. Smaller the correlation among the columns of $\vec{A}$, better is the sparse signal recovery performance.

There are inherent connections between error control codes, CDMA sequences and compressive sensing, as they all require small correlation among codewords/spreading-sequences/sensing-matrix-columns. There is a vast literature on connecting compressive sensing concepts with 
communication techniques. We have highlighted some of these works here. 
In \cite{candes2005error}, error control coding techniques have been developed for the case of sparse noise vector 
(which models impulse noise environments). Codes/sequences with small correlation have been used with sparse signal recovery techniques 
for a massive random access application in \cite{jain2020algorithms}. Techniques to compute the sparse Fourier transform
using the parity check matrix of an LDPC code as the sensing matrix along with peeling decoder have been developed in  \cite{pawar2013pulse}. Spatial modulation 
in multi-antenna systems encodes the information partially by activating a subset of antennas chosen from a large set \cite{mesleh2008spatial}. Index modulation in OFDM systems encode information partially in choosing a subset of sub-carriers among the available sub-carriers to send the data symbols \cite{abu2009subcarrier}.

In this paper, we develop a new error control coding scheme using sequences with low correlation and compressive sensing concepts. 
We develop a subblock sparse coding (SSC) scheme where information bits are \emph{non-linearly}
encoded in a $K$-sparse signal $\vec{x}$ of length $L$, with the non-zero entries chosen from an 
$M$-ary constellation. The SSC scheme carries roughly $K \log \frac{L}{K}$ bits in the 
support of $\vec{x}$ and $K \log M$ bits in the non-zero entries. The codeword $\vec{s}$ of length $N$ to be transmitted 
across the channel is obtained by multiplying $\vec{x}$ with a dictionary matrix $\vec{A}$ (of size $N \times L$). The columns 
of the dictionary matrix are chosen from the set of low correlation sequences from CDMA or quantum information theory. 
From the noisy observation $\vec{y} = \vec{s} + \vec{v} = \vec{A}\vec{x} + \vec{v}$,  we recover the sparse signal $\vec{x}$
(and hence the information bits) using a novel greedy match and decode (MAD) algorithm and its variations. With Gold codes from CDMA and mutually
unbiased bases from quantum information theory as the dictionary matrices,  
the codeword error rate performance of the proposed SSC encoding scheme with  MAD decoding  in AWGN channel is comparable and  
competitive with the widely used binary linear codes for very small ($N=8$) to moderate codeword lengths ($N=128$).
In addition, the proposed error control coding scheme extends easily to multi-user 
channels, such as multiple-access channel (MAC), broadcast channel (BC) and interference channels (IC). 
SSC scheme transmitting $B$ bits to a single user using a $K$-sparse signal can be easily modified to transmit a \emph{total} of $B$ bits to 
$P$ users with $P \leq K$, in a MAC or BC or IC. In addition, the overall error performance in the multi-user channel will be same
as that of an equivalent single user scenario. 
%

While index modulation in OFDM and spatial modulation techniques encode information partially in choosing a subset from 
available sub-carriers/antennas, these techniques require an underlying error control coding scheme to ensure small probability of error. On the other hand, our SSC encoding and MAD decoding is a new error control scheme by itself. 

Our work has connections to the work on non-linear
codes by Kerdock and Preparata \cite{kerdock1972class,preparata1968class}. Kerdock's work gave constructions for a large set of sequences with good 
distance/correlation properties and used them directly as codewords. Spherical codes \cite{delsarte1991spherical} also aim at developing a large set of codewords with good distance properties. In our work, we generate codewords using \emph{sparse} linear combinations of sequences with good correlation properties. Because of these linear combinations, the number of codewords in our SSC scheme is larger when compared to that of Kerdock codes and spherical codes of similar lengths. This increase in the number of codewords has two benefits, increase in the data rate and decrease in the overall energy per bit.  

Our SSC encoding scheme has a non-linear component (mapping from bits to sparse signal) and a linear component (mapping from
sparse signal to codeword). Due to the linear part, the recovery of sparse signal from the observation can be accomplished using simple decoding techniques. At the same time, the non-linear component in the encoding procedure enables direct extensions of the scheme 
to the multi-user channels.

The paper is organized as follows. In \secref{enc}, we describe the proposed sparse signal based encoding and decoding techniques. In \secref{dic}, we 
present the details of constructing dictionary matrices using Gold codes and complex mutually unbiased bases. In \secref{sims}, we present simulation 
studies on the error performance of the proposed schemes in AWGN channel and compare with some of the existing error control codes. 
In \secref{mus}, we discuss the details of extending the proposed coding techniques to multi-user scenarios. In \secref{con}, we give concluding remarks and directions for future work. 

\section{Encoding and Decoding Schemes} \label{sec:enc}

\subsection{Sparse Coding} \label{sec:sc}

Consider a dictionary matrix $\vec{A}$ of size $N \times L$, with $L \geq N$. The codewords for messages are obtained using \emph{sparse} linear combinations 
of columns of the matrix $\vec{A}$. We discuss the details of sparse encoding procedure below. 

\subsubsection{Encoding Procedure}
Fix the \emph{sparsity} level as $K$ with $K \leq N$. Choose a subset $\mc{S} \subset \{1,\cdots,L\}$ of size $\left|\mc{S}\right| = K$. Let us denote $\mc{S} = \{
\alpha_1,\cdots,\alpha_K\}$ with $1 \leq \alpha_k \leq L$. Let $\mc{Q} = \{\beta_1,\cdots,\beta_K\}$ be an ordered set of $K$ symbols chosen (allowing repetitions) from a $M-$ary constellation, with alphabet set  $\mc{B} = \{b_1,\cdots,b_M\}$. 
Denoting $i^{th}$ column of $\vec{A}$ by $\vec{a}_i$, a codeword of length $N$ is obtained as 
\begin{equation}
\vec{s} = \sum_{k=1}^K \beta_k \vec{a}_{\alpha_k}. \label{eq:cw1}
\end{equation}
Consider the sparse vector $\vec{x}$ of length $L$, with its $i^{th}$ entry given as
\begin{equation}
x_i = \left\{ \begin{array}{ll} \beta_k &  \text{if}~ i = \alpha_k  \\ 0 & \text{if}~ i \notin \mc{S} \end{array} \right. \label{eq:ss}
\end{equation}
Now, the codeword in \eqref{cw1} can be represented as
\begin{equation}
\vec{s} = \vec{A}\vec{x}. \label{eq:cw2}
\end{equation}
The information is encoded in the support set $\mc{S}$ of the sparse signal $\vec{x}$ and its non-zero entries given by the set $\mc{Q}$.  Let $\mc{C}$ denote the set of all possible codewords of the form \eqref{cw2}, with a fixed sparsity level $K$. Total number of codewords we can generate is
$|\mc{C}|=M^K \times \binom{L}{K}$. The total number of bits that can be transmitted in a block of $N$ channel uses is
\begin{equation}
N_b = K \floor{\log M } + \floor{ \log \binom{L}{K}}. \label{eq:nb1}
\end{equation}
In this paper, base of $\log(\cdot)$ is $2$, unless specified explicitly otherwise.
We define the \emph{code rate} of the encoding scheme in units of bits per real dimension (bpd) as the number of bits transmitted per  
real dimension utilized. If $\vec{A}$ is a real matrix and the modulation symbols in $\mc{Q}$ are chosen from a real constellation (such as PAM, BPSK), 
the code rate in bpd is $\frac{N_b}{N}$.  On the other hand, if $\vec{A}$ is a complex matrix and/or the constellation symbols are complex, 
the code rate in bpd is $\frac{N_b}{2N}$. In our encoding process, we also allow the special case of $M=1$, for which $\beta_k = +1, \forall k$. 
The average energy per bit of our sparse encoding scheme can be given as $\displaystyle E_b = \frac{1}{N_b} \frac{1}{|\mc{C}|} \sum_{\vec{s} \in \mc{C}} \|\vec{s}\|^2$. 
This proposed coding scheme is a non-linear code and can be considered as a generalization of orthogonal FSK. Note that, with the special case of $\vec{A}$ being a DFT matrix and setting $K=1$ and $M=1$, our encoding process leads to an orthogonal FSK scheme. An interesting analogy is that the columns of the dictionary matrix can be compared
to the words in a dictionary of a language. With this analogy, codewords are equivalent to sentences in a language, as they are obtained by using sparse combinations of words and different codewords/sentences convey different messages. 

\subsubsection{Decoding Procedure}
The received signal is modeled as
\begin{eqnarray}
\vec{y} &=& \vec{s} + \vec{v}, \nonumber \\ 
&=& \vec{A}\vec{x} + \vec{v}, \label{eq:obs1}
\end{eqnarray}
where $\vec{v}$ is additive noise. Information bits can be retrieved by recovering the sparse signal $\vec{x}$ from the observation $\vec{y}$. Sparse signal recovery 
can be done using greedy techniques \cite{mallat1993matching,cai2011orthogonal,tropp2004greed} or convex programming based techniques \cite{chen2001atomic}. In this paper, we consider a simple greedy algorithm which 
we refer as match and decode (MAD) described in Algorithm \ref{mad}. MAD algorithm takes the dictionary matrix $\vec{A}$, the observation $\vec{y}$, sparsity level $K$ as inputs and produce an estimate 
$\hvec{x}\of{K}$ of the sparse signal $\vec{x}$. It is ensured that the estimate 
$\hvec{x}\of{K}$ (of size $L$) has exactly $K$ non-zero entries from the constellation set $\mc{B}$.
Any sparse signal $\hvec{x}$ (of size $L$) with at most $K$ non-zero entries from set $\mc{B}$ can also be given as partial information to the MAD algorithm. 
If no partial information is available, $\hvec{x}$ is set as $\vec{0}$.   

\begin{algorithm}
\caption{Match and Decode Algorithm}\label{mad}
\begin{algorithmic}[1]

\State \textbf{Input:} Get the observarion $\vec{y}$, dictionary matrix $\vec{A}$, sparsity level $K$ and any partially recovered sparse signal  $\hvec{x}$. 

\State \textbf{Initialize:}  
Let the ordered sets $\hat{\mc{S}}$ and $\hat{\mc{Q}}$ denote the the support and the corresponding non-zero entries of partial information 
vector $\hvec{x}$. If $\hvec{x}=\vec{0}$, $\hat{\mc{S}}$ and $\hat{\mc{Q}}$ are empty sets. 
Initialize the iteration counter $t=|\hat{\mc{S}}|$, the residual $\vec{r}\of{t} = \vec{y} - \vec{A} \hvec{x}$ and the estimate $\hvec{x}\of{t} = \hvec{x}$.

\State \textbf{Match:} Correlate the residual with the columns of the dictionary matrix and the constellation symbols as given below. 
\begin{align}
c_i  &= \langle \vec{r}\of{t} , \vec{a}_i \rangle, ~~ i \in \{1,\cdots,L\} ~\& ~i \notin \hat{\mc{S}} \label{eq:metri0} \\
p_{i,m} &= \mathfrak{Real}\{c_i b_m^*\} - \frac{|b_m|^2}{2}, ~~ b_m \in \mc{B} \label{eq:metri}
\end{align}

\State \textbf{Decode:} Detect the active column and the corresponding non-zero entry. 
$(\hat{i},\hat{m}) = \arg\max_{\substack{i \notin \hat{\mc{S}} \\ 1 \leq m \leq M}} p_{i,m}$

\State \textbf{Update:} Update the recovered sparse signal information,  $\hat{\mc{S}} = \hat{\mc{S}} \cup \hat{i}$, $\hat{\mc{Q}} = \hat{\mc{Q}} \cup b_{\hat{m}}$, $\hvec{x}\of{t+1} = \hvec{x}\of{t} + b_{\hat{m}} \vec{e}_{\hat{i}}$. (Here $\vec{e}_n$ denotes $n^{th}$ standard basis). Update the residual  
\begin{align}
\vec{r}\of{t+1} &=   \vec{r}\of{t} -   b_{\hat{m}} \vec{a}_{\hat{i}}, \label{eq:resi} 
\end{align}
and increment the counter $t = t+1$.

\State \textbf{Stopping condition:} If $t < K$, repeat the above steps Match, Decode and Update. Else go to Step Ouptut.

\State \textbf{Output:} Recovered sparse signal is $\hvec{x}\of{K}$ and the recovered codeword $\hvec{s} = \vec{A} \hvec{x}\of{K}$. 

\end{algorithmic}
\end{algorithm}

We  note that, the correlation of residual with the columns of dictionary matrix in \eqref{metri0} needs to be computed only for one iteration. 
For the subsequent iterations, from \eqref{resi}, 
we have the recursion, $\langle \vec{r}\of{t+1} ,\vec{a}_i \rangle  = \langle \vec{r}\of{t},\vec{a}_i\> \rangle - 
 b_{\hat{m}} \langle \vec{a}_{\hat{i}} ,\vec{a}_i \rangle$, where
$b_{\hat{m}}$ and $\hat{i}$ denote the symbol and the active column detected in the previous iteration. We can store the symmetric gram matrix $\vec{A}^* \vec{A}$, 
to get the values of $\langle \vec{a}_{\hat{i}} ,\vec{a}_i \rangle$ needed in the recursion. 

Intuitively, the first iteration of MAD algorithm is the most error prone, since it faces the \emph{interference} from all the undetected columns. To improve on MAD performance,
we consider a variation, referred as parallel MAD. In the first iteration, we choose $T$ candidates for the active column, by taking the top $T$ metrics \eqref{metri}, and perform MAD decoding for each of these $T$ candidates, resulting in $T$ different estimates for the sparse signal. Among these $T$ estimates, we select the one with the smallest Euclidean distance to the observation, inspired by the decoder for white Gaussian noise. The mathematical details are described in Algorithm \ref{pmad} for completeness.

\begin{algorithm}
\caption{Parallel Match and Decode Algorithm} \label{pmad}
\begin{algorithmic}[1]

\State Given the dictionary matrix $\vec{A}$ and the observation vector $\vec{y}$, 
compute $c_i = \langle \vec{y} , \vec{a}_i \rangle,~i=1,\cdots,L$ and $p_{i,m} =  
\mathfrak{Real}\{c_i b_m^*\} - \frac{|b_m|^2}{2}, ~ b_m \in \mc{B}$.

\State Initialize parallel path index $n=1$; Initialize $\mc{D} = \emptyset$.  

\State Choose a candidate for active column and the corresponding non-zero entry: $(\hat{i}_n,\hat{m}_n) = \arg\max_{(i \notin \mc{D} ,m)} p_{i,m}$.

\State  Run MAD algorithm with inputs $(\vec{A},\vec{y},K)$ and prior information on sparse signal $\hvec{x} = b_{\hat{m}_n} e_{\hat{i}_n}$. 
Denote the recovered sparse signal output of MAD as $\hvec{x}_n$. 

\State Update $\mc{D} = \mc{D} \cup \hat{i}_n$ and $n=n+1$; If $n \leq T$, go back to Step 3.

\State Final output $\bvec{x} = \arg\min_{\hvec{x}_n; 1 \leq n \leq T} \| \vec{y}-\vec{A}\hvec{x}_n\|$.

\end{algorithmic}
\end{algorithm}

Main computationally intensive step in MAD (and parallel MAD) is computing the correlation between the observation and the columns of dictionary matrix in \eqref{metri0}, which amounts to computing the matrix multiplication $\vec{A}^* \vec{y}$. Depending on the choice of $\vec{A}$, efficient matrix multiplication techniques may be developed.

\subsubsection{Performance Guarantees}
With $\vec{s}$ being the transmit codeword and $\hvec{s}$ being the codeword recovered by the decoding algorithm, the event $\{\vec{s} \neq \hvec{s}\}$ results in a 
block error (at least one of the bits in the block is decoded in error). The mutual coherence of the dictionary matrix $\vec{A}$ defined below,
\begin{equation}
\mu = \max_{p \neq q}  \frac{\langle \vec{a}_p, \vec{a}_q \rangle}{\norm{\vec{a}_p} \norm{\vec{a}_q}},
\end{equation}
plays an important role in the recovery performance of MAD algorithm. 

\begin{lemma} \label{lem:rec}
In the absence of noise ($\vec{v} = \vec{0}$), and for the case of $M=1$, MAD algorithm recover the codeword perfectly, if $K < \frac{1}{2}(\mu^{-1} + 1)$.
\end{lemma}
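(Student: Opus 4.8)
The plan is to carry out the classical greedy‑pursuit exact‑recovery induction (Tropp's Exact Recovery Condition), specialised to the $M=1$ case. Here every nonzero coefficient equals $+1$, so $\vec x$ is the indicator of its support $\mc S$ with $|\mc S|=K$, and $\vec s=\vec A\vec x=\sum_{k\in\mc S}\vec a_k$. We may take $\norm{\vec a_i}=1$ for all $i$ (the dictionaries used in the paper have equal‑norm columns, and when $M=1$ a common column scaling does not change the $\arg\max$ in the Decode step), so that $|\langle\vec a_p,\vec a_q\rangle|\le\mu$ whenever $p\ne q$. With $\vec v=\vec 0$, at iteration $t$ the residual is $\vec r\of t=\vec s-\vec A\hvec x\of t$, and since for $M=1$ the only symbol is $b_1=1$, the metric becomes $p_{i,1}=\mathfrak{Real}\{c_i\}-\tfrac12$; the Decode step therefore reduces to choosing $\hat i=\arg\max_{i\notin\hat{\mc S}}\mathfrak{Real}\{\langle\vec r\of t,\vec a_i\rangle\}$, with no symbol to select.

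The induction hypothesis is: after $t<K$ iterations the selected set $\hat{\mc S}_t$ is contained in $\mc S$ and has size $t$; consequently, because each step of MAD with $M=1$ simply subtracts the chosen unit‑coefficient atom, $\vec r\of t=\sum_{j\in\mc S\setminus\hat{\mc S}_t}\vec a_j$, a codeword over the $K-t\ge1$ atoms not yet recovered. The base case $t=0$ is immediate since $\hvec x=\vec 0$ and $\vec r\of 0=\vec s$. For the inductive step I would bound the metric on the two relevant classes of columns. For a true, not‑yet‑selected atom $i\in\mc S\setminus\hat{\mc S}_t$, isolating the diagonal term gives $\mathfrak{Real}\{\langle\vec r\of t,\vec a_i\rangle\}\ge 1-(K-t-1)\mu$, since the sum has $K-t-1$ off‑diagonal terms of magnitude at most $\mu$. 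For a spurious atom $i\notin\mc S$ (necessarily $i\notin\hat{\mc S}_t$), all $K-t$ terms are off‑diagonal, so $\mathfrak{Real}\{\langle\vec r\of t,\vec a_i\rangle\}\le(K-t)\mu$. Hence the $\arg\max$ is forced to land in $\mc S\setminus\hat{\mc S}_t$ as soon as $(K-t)\mu<1-(K-t-1)\mu$, i.e. $\bigl(2(K-t)-1\bigr)\mu<1$.

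This condition is tightest at $t=0$, where it reads $(2K-1)\mu<1$, equivalently $K<\tfrac12(\mu^{-1}+1)$ — precisely the hypothesis of the lemma — and it only becomes slacker as $t$ grows. So under the stated bound on $K$, every one of the $K$ iterations selects a new correct atom; after $K$ iterations $\hat{\mc S}_K=\mc S$ and all recovered coefficients equal $1$, so $\hvec x\of K=\vec x$ and the output codeword is $\hvec s=\vec A\hvec x\of K=\vec A\vec x=\vec s$.

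The only delicate point — the "main obstacle," modest as it is — is getting a \emph{strict} separation between the best true‑atom metric and the best spurious‑atom metric, so that the $\arg\max$ is unambiguously a true index no matter how ties among true atoms are broken; the strict inequality $\bigl(2(K-t)-1\bigr)\mu<1$ obtained above does exactly this, and it also silently uses $\mu<1$, which holds automatically since distinct columns are not parallel. It is worth stating explicitly in the write‑up why the $M=1$ assumption keeps the argument clean: with a larger constellation one would additionally have to verify that the Decode step picks the correct symbol $b_{\hat m}$, and the residual would no longer be a plain sum of remaining atoms, both of which complicate the two bounds above.
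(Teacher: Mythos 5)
Your proof is correct, but it follows a different route from the paper. The paper does not re-derive the coherence condition at all: it simply cites the known exact-recovery result for OMP under $K < \frac{1}{2}(\mu^{-1}+1)$ from Tropp's work and then argues qualitatively that, in the noiseless $M=1$ case, MAD can only do better than OMP (setting the nonzero entries to unity is exact here, and subtracting the detected atom rather than projecting onto the orthogonal complement does not attenuate the remaining signal components), so the OMP guarantee transfers to MAD. You instead give a self-contained induction: with $\vec{v}=\vec{0}$ and unit coefficients the residual after $t$ correct selections is $\sum_{j\in\mc{S}\setminus\hat{\mc{S}}_t}\vec{a}_j$, and the bounds $\mathfrak{Real}\{\langle\vec{r}\of{t},\vec{a}_i\rangle\}\ge 1-(K-t-1)\mu$ for remaining true atoms versus $\le(K-t)\mu$ for spurious ones force a correct choice whenever $(2(K-t)-1)\mu<1$, which is tightest at $t=0$ and gives exactly the stated condition. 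Your approach buys rigor and transparency: it handles MAD's actual selection metric (real part of the correlation, with the constant $-|b_m|^2/2$ offset irrelevant for $M=1$) and its unit-coefficient residual update directly, rather than relying on the paper's informal ``at least as good as OMP'' comparison, and it makes the strict separation (hence unambiguous $\arg\max$) explicit. The paper's approach buys brevity and a pointer to the literature where the same coherence argument is carried out for OMP. One small remark: you use $|\langle\vec{a}_p,\vec{a}_q\rangle|\le\mu$, i.e.\ the standard modulus-based coherence, whereas the paper's displayed definition of $\mu$ omits the modulus; your reading is the intended one, but it is worth flagging that your bounds need the modulus (or at least a bound on $|\mathfrak{Real}\{\langle\vec{a}_p,\vec{a}_q\rangle\}|$).
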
 
\begin{proof}
This result has already been established for orthogonal matching pursuit (OMP) algorithm in \cite{tropp2004greed}. We note that
MAD differs from  OMP in certain aspects. For the $M=1$ case, MAD algorithm sets the non-zero entries as unity. This is better than OMP, which uses least squares estimates for the non-zero entries in each iteration. In computing the residual, MAD subtracts out the detected columns from the observation. This is better than OMP, 
which projects the observation onto the orthogonal complement of the 
detected columns, possibly reducing the signal component from yet-to-be-detected columns. 
Hence, MAD recovery will be at least as good as OMP recovery, when $\vec{v}=\vec{0}$ and $M=1$. 
\end{proof}

The exact support recovery performance of OMP in the presence of bounded noise and Gaussian noise are characterized in \cite{cai2011orthogonal}. 
The same results hold true for MAD algorithm as well, when $M=1$. When $M>1$, the error event can also happen due to incorrect
decoding of the modulation symbols in the constellation. Characterizing that error event will depend on the exact constellation shape and 
this analysis can be carried out in a future work.


\subsection{Subblock Sparse Coding} \label{sec:ssc}
A drawback of the sparse coding scheme in \secref{sc} is that a large look-up table is needed to map the information bits to the sparse signals. 
In order to eliminate the look-up table, we propose a subblock sparse coding (SSC) scheme described below. In this scheme, 
we partition the dictionary matrix $\vec{A}$ into $K$ subblocks such that $\vec{A} = \left[\vec{A}_1 \cdots \vec{A}_K \right]$ with $k^{th}$ subblock $\vec{A}_k$ 
having a size of $N \times L_k$ and $\displaystyle \sum_{k=1}^K L_k = L$. This partitioning, in general, can be done in an arbitrary manner. Later in this section, we present
a  partitioning technique with the aim of maximizing the number of information bits encoded by the scheme. 
SSC technique transmits $N_b$ number of bits (in $N$ channel uses), where 
\begin{equation}
N_b = K \floor{ \log M } + \sum_{k=1}^K \floor{\log L_k}.  \label{eq:nb}
\end{equation}
From a bit stream of length $N_b$, we take the first $K \floor{ \log M }$ bits and obtain $K$ modulation symbols $\mc{Q} = \{\beta_1,\cdots,\beta_K\}$ from an $M$-ary constellation. Now, we segment the remaining bit stream into $K$ strings with $k^{th}$ string $\vec{b}_k$ having a length of $\floor{\log L_k}$. Let $N_k$ denote the unsigned integer corresponding to the bit string $\vec{b}_k$. Now, from each subblock $\vec{A}_k$, we select the column indexed by the number $N_k+1$, for $1\leq k \leq K$. 
Note that, based on this procedure, the column indices chosen from the original matrix $\vec{A}$ is given by 
$\alpha_k = N_k+1 +  \sum_{i = 1}^{k-1} L_i $, $1 \leq k \leq K$. With support set $\mc{S} = \{ \alpha_1,\cdots,\alpha_K\}$ and the modulation symbols in $\mc{Q}$, 
the codeword $\vec{s}$ is obtained as given in \eqref{cw1}. By this  subblock encoding procedure, we avoid the exhaustive 
look-up table needed to map the bits to the codewords. However, the dictionary matrix $\vec{A}$ needs to be stored. 
In addition, the number of bits transmitted in a block for SSC scheme will be less than that of sparse coding scheme discussed in 
\secref{sc}. 
For the SSC scheme, the MAD algorithm can be modified to discard the subblock corresponding to each detected column from the subsequent iterations. 

With the aim of maximizing $\sum_{k=1}^K \floor{\log L_k}$, we describe a procedure for truncating and partitioning a matrix with 
$\bar{L}$ columns into $K$ subblocks (where $K \leq \frac{\bar{L}}{2}$) with $k^{th}$ subblock having $L_k$ columns, 
such that each $L_k$ is a power of $2$,  and $\sum_k L_k = L \leq \bar{L}$. 
The algorithm has $K$ iterations and the $n^{th}$ iteration with $n \in \{1,\cdots,K\}$, involves partitioning the dictionary matrix into $n$ subblocks, with lengths given by $\{L_1\of{n},\cdots,L_n\of{n}\}$. 
In the first iteration $n=1$, we set  $L_1\of{1} =  2^{\floor{\log \bar{L}}}$. In the iteration $n$, we make use of the subblock lengths obtained in the previous iteration, 
arranged in the ascending order such that 
$L_1\of{n-1} \leq L_2\of{n-1} \cdots \leq L_{n-1}\of{n-1}$. We compute the number of \emph{remaining} columns from the iteration $n-1$ as $r\of{n-1} = \bar{L} - \sum_{k=1}^{n-1} L_k\of{n-1}$. 
If $r\of{n-1} \geq \frac{L_{n-1}\of{n-1}}{2}$, then we create a \emph{new} subblock of length $L_n\of{n} = 2^{\floor{ \log r\of{n-1}}}$ and retain all the previous subblocks so that $L_1\of{n} = L_1\of{n-1}, 
\cdots, L_{n-1}\of{n} = L_{n-1}\of{n-1}$. On the other hand, if $r\of{n-1} < \frac{L_{n-1}\of{n-1}}{2}$, we split the largest subblock from the previous iteration into two equal parts so that 
$L_{n-1}\of{n} = \frac{L_{n-1}\of{n-1}}{2}, L_{n}\of{n} = \frac{L_{n-1}\of{n-1}}{2}$ and the remaining subblocks are retained so that 
$L_1\of{n} = L_1\of{n-1}, \cdots, L_{n-2}\of{n} = L_{n-2}\of{n-1}$. We sort the lengths $\{ L_1\of{n},\cdots,L_n\of{n} \}$ in ascending order, for use in the subsequent iteration. 
At the end of iteration $K$, we get the required lengths for $K$ subblocks. Using mathematical induction, we can argue that the above procedure is optimal in
maximizing $\sum_{k=1}^K \floor{\log L_k}$.




\section{Dictionary Matrix Construction} \label{sec:dic}

The choice of the dictionary matrix $\vec{A}$ plays a vital role in the block error performance. It is desirable that the dictionary matrix has a large number of columns (as the number of information bits increases with $L$, for fixed $N$ and $K$) with small correlation among the columns (for good sparse signal recovery performance). In this paper, we consider  dictionary matrix constructions using Gold code sequences from CDMA literature and mutually unbiased bases from quantum information theory. 

\subsection{Gold Codes}
Gold codes are binary sequences with alphabets $\{\pm 1\}$. Considering lengths of the form $N = 2^n-1$, where $n$ is any positive integer, there are  $2^n+1$ Gold sequences. By considering all the circular shifts of these sequences, we get $2^{2n}-1$ sequences. When dictionary matrix columns are  constructed with these $2^{2n}-1$ sequences normalized to unit norm, the resulting mutual coherence $\mu$ is given by \cite{Goldcode}, 
\begin{equation}
    \mu=\begin{cases} 
      \frac{2^\frac{n+1}{2}}{N}, & n \text{ is odd,} \\
      \frac{2^\frac{n+2}{2}}{N}, & n \text{ is even.}
   \end{cases} \label{eq:mugold}
\end{equation}
We note that odd value of $n$ leads to relatively smaller mutual coherence.  We can add any column of the identity matrix to the Gold code dictionary matrix, to get a total of $L=2^{2n}$ columns (which is a power of 2), with the mutual coherence same as \eqref{mugold}. 
Storing such a dictionary matrix will require $N(N+1)^2$ bits. For $N=127$, this Gold code dictionary matrix size is approximately $2.1$ MB.

\subsection{Mutually Unbiased Bases}


Two orthonormal bases $\mc{U}_1$ and $\mc{U}_2$ of the $N$-dimensional inner product space $\mathbb{C}^N$ are called mutually unbiased if and only if $|\langle\vec{x},\vec{y}\rangle|=\frac{1}{\sqrt{N}}$ for any $\vec{x}\in \mc{U}_1$ and $\vec{y}\in \mc{U}_2$. A set of $m$ orthonormal bases of $\mathbb{C}^N$ is called mutually unbiased if all bases in the set are pairwise mutually unbiased.
Let $Q(N)$ denote the maximum number of orthonormal bases of $\mathbb{C}^N$, which are pairwise mutually unbiased. In \cite{wootters1989optimal}, it has been shown  that $Q(N) \leq N$ (excluding the standard basis), with equality if $N$ is a prime power. Explicit constructions are also given in \cite{wootters1989optimal} for 
getting $N$ MUB in $N$-dimensional complex vector space $\mathbb{C}^N$, if $N$ is a prime power.





If $N$ is a power of 2, the $N$ MUB unitary matrices $\{\vec{U}_1,\cdots,\vec{U}_N\}$ have the following properties. 
\begin{itemize}
\item The entries in all the $N$ unitary matrices belong to the set $\{\frac{+1}{\sqrt{N}},\frac{-1}{\sqrt{N}},\frac{+i}{\sqrt{N}},\frac{-i}{\sqrt{N}}\}$. This 
follows from the construction of MUB given in \cite{wootters1989optimal}. Storing all these $N$ unitary matrices will require $2 N^3$ bits. For $N=128$, this storage requirement is approximately $4.2$ MB.
\item For $N$ up to 256, we find that the inner products  $\langle\vec{x},\vec{y}\rangle \in \{\frac{+1}{\sqrt{N}},\frac{-1}{\sqrt{N}},\frac{+i}{\sqrt{N}},\frac{-i}{\sqrt{N}}\}$ 
when $\vec{x}\in \vec{U}_i$ and $\vec{y}\in \vec{U}_j$ for $i \neq j$. We conjecture that this property holds true when $N$ is any power of 2. 
\end{itemize}

We construct dictionary matrix using $N$ MUB as $\vec{A} = [\vec{U}_1 \cdots \vec{U}_N]$. In this case, $L = N^2$ and the corresponding mutual coherence $\mu$ is 
$\frac{1}{\sqrt{N}}$. In addition, when $N$ is a power of $2$, we can always split $\vec{A}$ into $K$ subblocks with size of each subblock $L_k$ being a power of $2$ and each 
$L_k \geq \frac{N^2}{2K}$. 
When the entries in $\vec{A}$ belong to $\{+1,-1,+j,-j\}$ except for a common scaling factor of $\frac{1}{\sqrt{N}}$, the inner products $\langle \vec{a}_i,\vec{y} \rangle$
needed in the MAD algorithm involve only additions (no multiplications).

\section{Simulation Results} \label{sec:sims}

We study the performance of the block error rate (BLER), also referred as codeword error rate, for the proposed encoding and decoding schemes in additive 
white Gaussian noise channels. For the complex MUB dictionary matrix, the non-zero entries of the sparse signal are chosen from QPSK constellation. For real 
Gold code dictionary matrix, we consider BPSK constellation. When the non-zero entries in the $K$-sparse signal $\vec{x}$ are uncorrelated, it easily follows that, the 
expected energy of the codeword $\vec{s}=\vec{A}\vec{x}$ is $E_s = K$, when the columns of dictionary matrix are of unit norm. Energy per bit $E_b$ is obtained 
by dividing $E_s$ by the total number of bits conveyed by the sparse signal $\vec{x}$. With $\frac{N_0}{2}$ denoting the variance of the Gaussian noise per real 
dimension, we plot the BLER versus $E_b/N_0$ of the proposed schemes.     





\subsection{Study of Proposed Schemes}

In this section, we study various combinations of the proposed encoding and decoding schemes to understand their impact on the performance. In this subsection, we restrict our attention to complex MUB dictionary matrix with $N=64$.

In Fig. \ref{fig:ex_vs_sb}, we compare the performance of sparse coding (SC) scheme and the subblock sparse coding scheme, with MAD decoding. 
Since SC scheme allows all the 
$\binom{L}{K}$  possibilities for the support set, the number of bits $N_b$ per block of SC scheme will be larger than SSC scheme. Specifically, with complex MUB 
dictionary matrix of length $N=64$, for sparsity levels $K=1,3,5$, the values of $N_b$ for SC scheme are, respectively, $14,39,63$ and the corresponding values for SSC scheme are $14,37,58$. Hence, when 
$E_b/N_0$ of the two schemes are same, SSC scheme has a larger value for noise variance parameter $N_0$. On the other hand, the MAD decoder faces a larger search space to detect each active column of SC scheme, while in SSC scheme, once an active column is detected, the entire subblock can be removed from subsequent iterations. These two effects counteract each other and we find that the BLER versus $E_b/N_0$ performance of both SC and SSC schemes are quite close to each other. But SC scheme has higher code rates at the expense of extensive look up table needed to map the information bits to the (support of the) sparse signal. 


\begin{figure}
    \centering
    \includegraphics{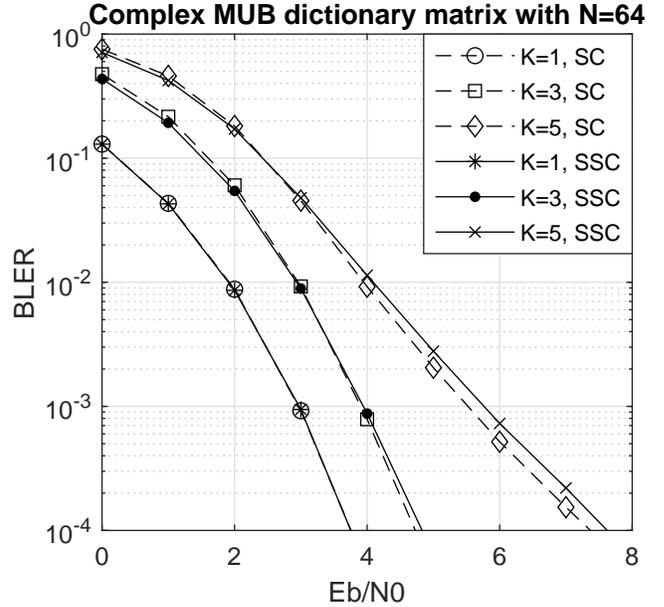}
    \caption{Impact of the Encoding Scheme.}
    \label{fig:ex_vs_sb}
\end{figure}

In Fig. \ref{fig:MAD}, we compare the performance of sparse coding scheme with MAD decoder and OMP algorithm \cite{tropp2004greed}. In the OMP algorithm, in each iteration, after an active column is identified based on the magnitude of correlation with the residual,  the least squares estimates of the non-zero entries are quantized to the nearest constellation points. On the other hand, MAD decoder utilizes the finite alphabet size of the non-zero entries by 
jointly decoding the active column and the corresponding constellation point. In addition, the OMP approach of projecting the residuals onto the orthogonal complement of
the detected columns in each iteration leads to reduction of signal components from the yet-to-be detected active columns. On the other hand, MAD simply subtracts out the detected columns without affecting the yet-to-be detected active columns. Due to these reasons, the proposed MAD decoder provides better performance, when compared to the OMP algorithm.

\begin{figure}
    \centering
    \includegraphics{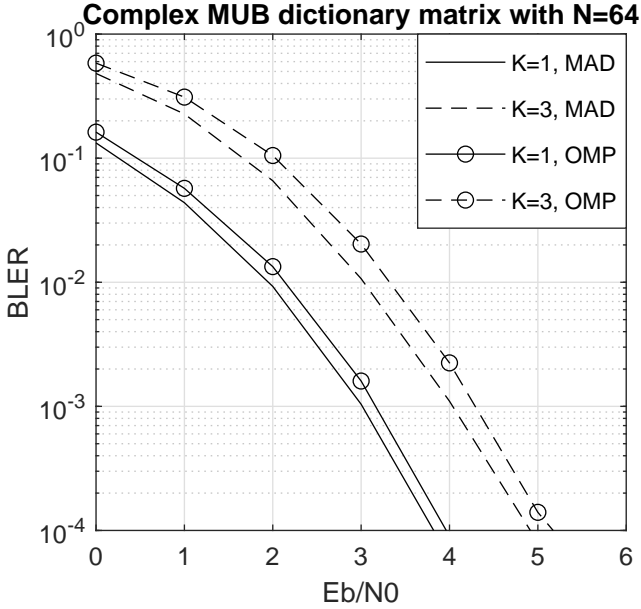}
    \caption{Impact of the Decoding Algorithm.}
    \label{fig:MAD}
\end{figure}

In Fig. \ref{fig:MAD_vs_pMAD}, we compare the performance of MAD and parallel MAD decoding for SSC scheme. 
 The probability of selecting a wrong column in the first iteration of MAD decoding increases with sparsity level $K$ due to interference from 
$K-1$ remaining columns, especially when $K$ is close to $\frac{1}{2\mu}$. Parallel MAD overcomes this problem by selecting $T$ candidates for the active column
in the first iteration and subsequently running $T$ parallel MAD decoders. In all our simulations, we set the value of $T$ equal to the sparsity level $K$. The results show significant gains of parallel MAD over MAD when $K=5$ 
with complex MUB dictionary matrix of size $N=64$.



\begin{figure}
    \centering
    \includegraphics{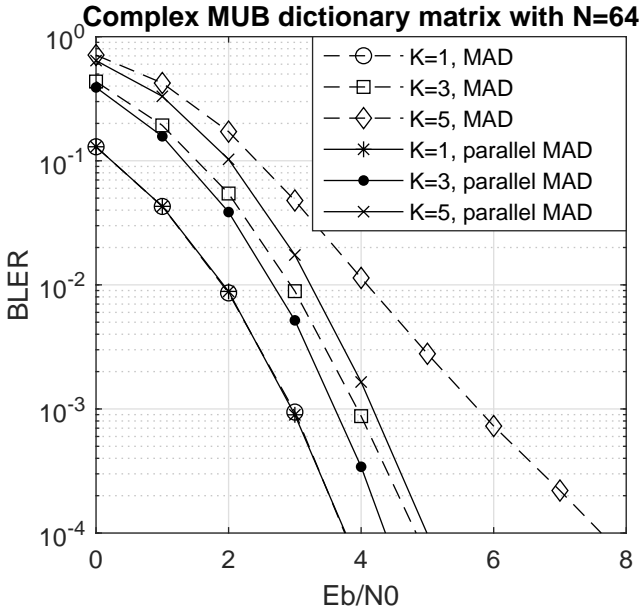}
   \caption{Comparing MAD with parallel MAD.}  
   \label{fig:MAD_vs_pMAD}
\end{figure}


We consider introducing \emph{random} phases to the columns of the complex MUB dictionary matrix based on the following reasoning. 
With $i$ being an index of one of the active columns from the sparse signal support set $\mc{S}$, consider
the inner product $\langle \vec{y},\vec{a}_i \rangle = \beta_i + \sum_{k \in \mc{S}, k \neq i} \beta_k \langle \vec{a}_k,\vec{a}_i \rangle  + \langle \vec{v},\vec{a}_i\rangle $. MAD decoder is prone to error when the net interference from other active columns has high magnitude. For the complex MUB dictionary matrix with $N=64$, the inner product between any two non-orthogonal columns belong to 
the set $\{\frac{+1}{\sqrt{N}},\frac{-1}{\sqrt{N}},\frac{+i}{\sqrt{N}},\frac{-i}{\sqrt{N}}\}$. 
Due to this property, there are many possible support sets $\mc{S}$, for which the interference terms can add coherently to result in a high magnitude. To overcome this problem, we introduce a random phase to each column of the dictionary matrix to minimize the constructive addition of interfering terms. Specifically, the random phase 
dictionary matrix is obtained as $\vec{A}\vec{\Phi}$ where $\vec{A}$ is the original (zero phase) 
MUB dictionary matrix and $\vec{\Phi}$ is a diagonal matrix with random diagonal entries $\exp\{j\theta_i\}$, with $\theta_i$'s  being i.i.d.  uniform over $[0,2\pi]$.  
In Fig. \ref{fig:Phase}, the comparison shows that random phase matrix has gains over zero phase matrix, when the sparsity leve $K$ is close to $\frac{1}{2\mu}$. 
In fact, in Figures \ref{fig:ex_vs_sb} and \ref{fig:MAD_vs_pMAD}, the plots corresponding to $K=5$ involve random phase MUB matrix.



\begin{figure}
    \centering
    \includegraphics{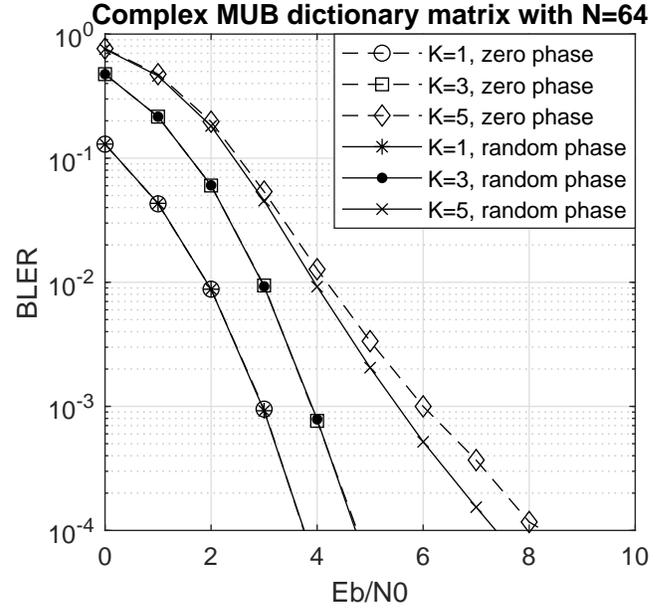}
   \caption{Impact of Random Phase.}
    \label{fig:Phase}
\end{figure}

\subsection{Comparison with Existing Error Control Codes} 
In this subsection, we study the performance of SSC scheme with parallel MAD decoding for various code rates and also compare with some 
of the existing error control codes.  In the conventional error control coding terminology, an $(n,k)$ coding scheme 
takes $k$ information bits and maps it into a (real) codeword of length $n$, with the ratio $\frac{k}{n}$ being referred as the code rate of the scheme. Hence, the 
SSC scheme can be compared directly with conventional error control coding schemes with identical code rates (measured in bits per real dimension). To match with the
conventional notation, we denote the SSC scheme conveying $N_b$ bits 
using real dictionary matrix with column length $N$ as an $(N, N_b)$ coding scheme, while with complex dictionary matrix,
we have $(2N,N_b)$ scheme. 

\begin{figure}
    \centering
    \includegraphics[scale=1]{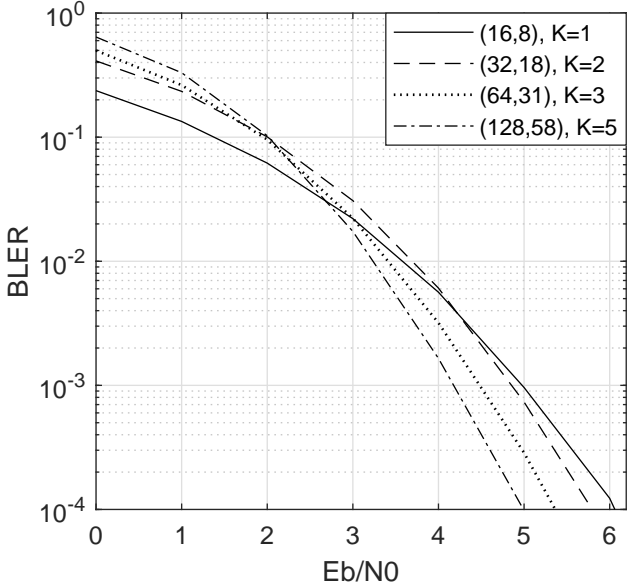}
   \caption{Complex MUB dictionary with code rate $\approx \frac{1}{2}$.}
    \label{fig:ratehalf}
\end{figure}


\begin{figure}
    \centering
    \includegraphics{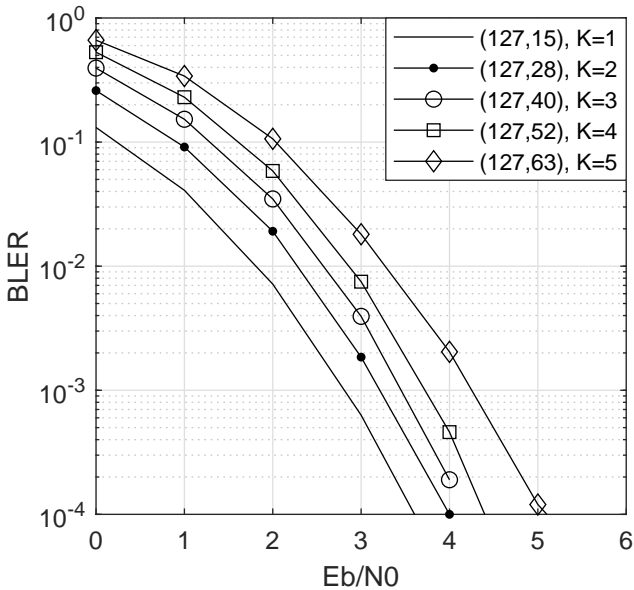}
    \caption{Gold Code Dictionary matrix with length $N=127$.}
    \label{fig:127gold}
\end{figure}

\begin{figure}
    \centering
    \includegraphics[scale=1]{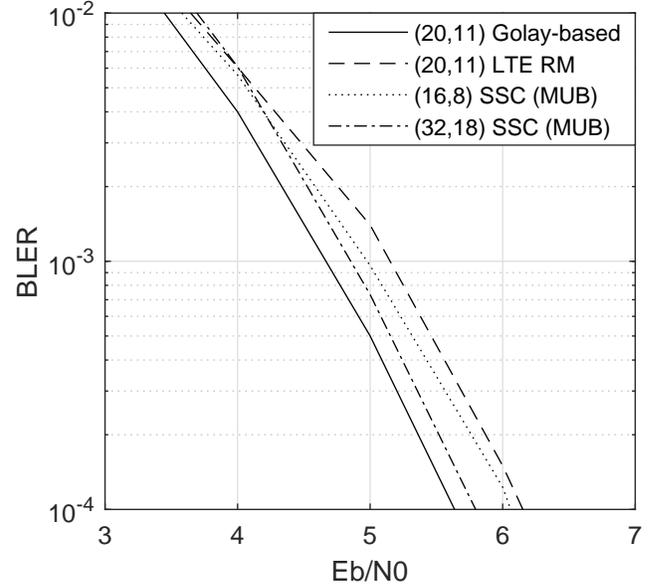}
   \caption{Comparsion with very small block length codes.}
    \label{fig:len20}
\end{figure}

\begin{figure}
    \centering
    \includegraphics[scale=1]{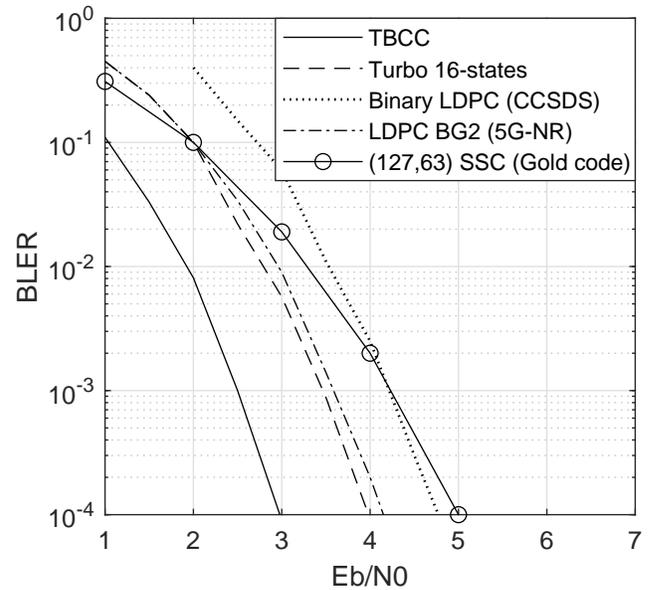}
   \caption{Comparison with exisiting $(128,64)$ Codes.}
    \label{fig:len128}
\end{figure}

In \figref{ratehalf}, we compare the performance of complex MUB dictionary matrices with various values of $N$ and $K$, while setting the code rate (bits 
per real dimension) approximately equal to half. We see that, for low $E_b/N_0$ values, smaller $N$ performs better. 
On the other hand, when small BLER is desired, larger block lengths are better. Here random phase is introduced when $K\geq \frac{1}{2\mu}$. 
In \figref{127gold}, we study the performance of Gold code dictionary matrix with $N=127$ for various values of code rates, by
varying the sparsity level $K$.

We present performance comparison with some of the existing error control coding schemes 
 with code rates close to half, for very small block lengths in \figref{len20}, 
and moderate block lengths in \figref{len128}. 
In \figref{len20}, the BLER performance of $(20,11)$ Golay based code and $(20,11)$ Reed Muller (RM) code used in the LTE standard are
taken from the plots given in \cite{van2018short}. We compare these codes with our $(16,8)$ and $(32,18)$ SSC schemes 
(complex MUB dictionary matrix with $K=1$ and $K=2$ respectively). 
We find that our schemes perform better than the RM code used in the LTE standard. In \figref{len128}, using the plots given in \cite{cocskun2019efficient}, we compare the performance of our $(127,63)$ SSC scheme  (Gold code dictionary matrix with $K=5$) with some of the existing $(128,64)$ error control codes: tail biting convolutional code (TBCC) with constraint length 14, binary LDPC codes used 
in the CCSDS standard,  LDPC codes (base graph 2) from 5G-NR standard and Turbo code with 16 states. More details about these existing codes are given in \cite{cocskun2019efficient}. Our proposed scheme performs comparable to the binary LDPC code from the CCSDS standard.

\section{Application in Multi-User Scenarios} \label{sec:mus}
In this Section, we discuss on how the SSC encoding and MAD decoding scheme can be used in multi-user 
scenarios. First, we note that linear codes are naturally unsuited for multi-user scenarios. To illustrate this, consider 
the simple case of two users who employ the same linear code. If $\vec{s}_1$ is the codeword sent by the user-1 and $\vec{s}_2$
is the codeword sent by the user-2, then their algebraic sum $\vec{s}_1+\vec{s}_2$ is another valid codeword for both users. 
This makes it impossible to recover the individual codewords from the superposition. In order to overcome this problem, power-domain non-orthogonal
multiple access (NOMA) techniques in the literature \cite{saito2013non} focus on grouping the users such that user-2 has much smaller channel gain than user-1. 
In this case, the superposition can be represented as $\vec{s}_1 + h \vec{s}_2$ with $|h| \ll 1$, facilitating successive interference cancellation based decoding 
\cite{saito2013non}. With this grouping, the data rate for user-2 will be very small compared to that of user-1. On the other hand, the SSC encoding technique in \secref{ssc} provides a straightforward way to communicate in multi-user scenarios.

\subsection{Encoding for Multiple Users}
Using the SSC scheme with sparsity level $K$ described in \secref{ssc}, we can support $P$-user multiple access channel, or $P$-user broadcast channel 
or $P$-user interference channel \cite{cover1999elements,el2011network}, for any $P \leq K$. First, we illustrate how the SSC scheme can be employed to generate the codeword of each user based on the user's information bits. 
As before, we partition the dictionary matrix $\vec{A}$ into $K$ subblocks, with subblock $\vec{A}_k$ 
having $L_k$ number of columns. These $K$ subblocks are divided among $P$ users, with $\mc{A}_i = \{\vec{A}_{i,1},\cdots,\vec{A}_{i,K_i}\} \subset 
\{\vec{A}_1,\cdots,\vec{A}_K\}$ denoting the ordered set of $K_i$ subblocks assigned to user-$i$. 
Note that $\mc{A}_i \cap \mc{A}_j = \emptyset$ if $i \neq j$ and $\displaystyle \sum_{i=1}^K K_i = K$. The codeword for user-$i$ is obtained as
\begin{equation}
\vec{s}_i = \sum_{k=1}^{K_i} \beta_{i,k} \vec{a}_{i,k} \label{eq:cwi}
\end{equation}
where symbols $\{\beta_{i,1},\cdots,\beta_{i,K_i}\}$ are chosen from $M_i$-ary constellation and the column $\vec{a}_{i,k}$ is chosen from the subblock $\vec{A}_{i,k}$
for $1 \leq k \leq K_i$. Denoting the number of columns in $\vec{A}_{i,k}$ as $L_{i,k}$, the total number of bits that can be conveyed for user-$i$ is 
\begin{equation}
N_{b_i} = K_i \floor{ \log M_i } + \sum_{i=1}^{K_i} \floor{\log L_{i,k}}. \label{eq:nbi} 
\end{equation}
Now, we will see how these codewords can be used in MAC, BC and IC.

\subsubsection{Multiple Access Channel}
Multiple access channel is equivalent to an uplink scenario in a cellular network, where $P$ users are sending their data to a single receiver. 
In the MAC, the encoding is done independently by each user, which coincides with the SSC based procedure in \eqref{cwi}.
The observation at the receiver is 
\begin{eqnarray}
\vec{y} &=& \sum_{i=1}^P \vec{s}_i + \vec{v},  \\
&=& \sum_{i=1}^P \sum_{k=1}^{K_i} \beta_{i,k} \vec{a}_{i,k} + \vec{v}. \label{eq:mac}
\end{eqnarray}
The decoding is done jointly at the receiver, which can be done using the MAD algorithm (and parallel MAD), 
which recovers the support of the active columns $\{\vec{a}_{i,k}\}$ and the corresponding modulation symbols $\beta_{i,k}$ for each user, 
using the received signal $\vec{y}$ in \eqref{mac}. 

If all the users employ the same constellation with $M_i=M, \forall i$, then the superposition of
codewords from $P$ users in \eqref{mac} will have identical structure to the codeword generated based on the SSC encoding procedure 
(with sparsity level $K=\sum_iK_i$) from \secref{ssc}. 
As a result, the total number of bits of all the users $\sum_{i=1}^P N_{b_i}$ (from \eqref{nbi}) will be equal to $N_b$ from \eqref{nb}. In addition,
the overall BLER performance of this MAC channel (probability that all the bits of all the users are decoded correctly with respect to the overall energy spent per bit) will coincide with the performance of the corresponding single user case. 
Note that, the performance of this single user case in AWGN has already been studied in \secref{sims}.

Now, considering the case where user-$i$ has channel gain $h_i$, the received signal is
\begin{equation}
\vec{y} = \sum_{i=1}^P h_i g_i \vec{s}_i + \vec{v}, \label{eq:mac1}
\end{equation}
where $g_i$ denotes the transmit gain (power control) employed by user-$i$. If the gains $g_i$ are chosen such that $h_i g_i = 1, \forall i$, then 
the AWGN performance of the above MAC model \eqref{mac1} will coincide with AWGN performance of the corresponding single user case. This implies that, if we group the 
users such that $|h_i|$ are (approximately) equal, then power control gains $|g_i|$ can be (approximately) same. Hence, in our SSC based scheme for MAC, grouping users with similar channel gains and dividing the dictionary matrix into (approximately) equal sizes among the users is beneficial. 
This is in contrast with the power-domain NOMA techniques in the literature \cite{saito2013non}, where a high channel gain user is typically grouped with low channel gain user. 
If different users have different channel gains, different power constraints and different rate requirements, then there are open issues in the SSC based scheme, such 
as segmenting the dictionary matrix among users, choosing the constellation size for each user and the modifications required for the MAD decoding. 

\subsubsection{Broadcast Channel}
Broadcast channel is similar to the downlink scenario in a cellular network, where the base station transmits respective information messages to $P$ users. Here, 
encoding is done jointly at the base station and the decoding is done by each user separately. It is well known that for degraded broadcast channel (for instance, Gaussian broadcast channel), superposition coding is optimal \cite{cover1999elements,el2011network}. Once the codebooks of all the users are designed (jointly), 
superposition coding simply chooses the codeword for $p^{th}$ user from his codebook using his information bits (and independent of the other users' codewords) and 
then transmits the \emph{sum} of the codewords of all the users. The SSC encoding procedure in \eqref{cwi} can mimic the superposition coding, in a straightforward way. 
Once the dictionary matrix is chosen, subblocks are segmented and allotted among the users, the codeword for each user can be obtained based on his own 
information bits as given by \eqref{cwi}, and the transmitter sends the sum of all the users' codewords as
\begin{equation}
\vec{s} = \sum_{i=1}^P \vec{s}_i. \label{eq:spc}
\end{equation}
Note that the above sum \eqref{spc} resembles the codeword generation of single user scenario \secref{ssc}. Received signal at the user-$i$ is given by
\begin{equation}
\vec{y}_i = \vec{s} + \vec{n_i},
\end{equation}
where $\vec{n}_i$ is the noise at the user-$i$, with variance $\sigma_i^2$. MAD decoding from \secref{ssc} can be employed by each user, which recovers the 
active columns present in $\vec{s}$ and the corresponding modulation symbols. Hence, in this approach, each user recovers the information sent to other users in addition
to his own information. The user with the highest noise variance ($\argmax_i \sigma_i^2$) will have the worst error performance (assuming the noise distributions are 
same except for the variance). On the other hand, if all the users have 
same noise variance, the performance (the probability that all the users received all their bits correctly) will coincide with the corresponding single user scenario 
(with the same noise variance). If the users' channel quality is asymmetric, and there is private information for good channel quality users (which the low channel
quality users should not be able to decode), then developing sparse coding based techniques is an open problem. 

\subsubsection{Interference Channel}
In the interference channel, there are $P$ transmitters and $P$ receivers. Each transmitter sends information to a corresponding intended receiver. With $i^{th}$ 
transmitter generating codeword as in \eqref{cwi}, the received signal at the $i^{th}$ receiver is given by
\begin{equation}
\vec{y}_i = \vec{s}_i + \sum_{\stackrel{j=1,\cdots,P}{j \neq i}} h_{i,j} \vec{s}_j + \vec{n}_i, \label{eq:ic}
\end{equation}
where $h_{i,j}$ denotes the channel gain from $j^{th}$ transmitter to the $i^{th}$ receiver.  Without loss of generality, we have taken $h_{i,i} = 1$. MAD decoding
employed at the $i^{th}$ receiver recovers the codewords of all the transmitters. Again, if $|h_{i,j}| = 1, \forall i,j$, and the noise statistics are identical across 
all the receivers, then the decoding performance (successful recovery of all the codewords) of all the receivers will coincide 
with the corresponding single user case. In the \emph{strong interference} 
$|h_{i,j}| \gg 1$ regime \cite{cover1999elements,el2011network}, MAD decoding can be modified to first decode the messages of 
all the interfering users (by restricting the correlations to the subblocks of 
interfering users), cancel the interference, and then proceed to find the active columns in the subblocks of the intended user. In the \emph{weak interference} 
$|h_{i,j}| \gg 1$ regime \cite{cover1999elements,el2011network}, MAD decoder can first find the message of the intended user 
directly by restricting the correlations to the subblocks of the intended user. Detailed study of  the 
SSC-MAD based techniques in the strong and weak interference regimes  can be explored in a future work.

Let us illustrate the gains of using our SSC schemes in multi-user channels when compared to using conventional error control codes in orthogonal multiple access fashion. Using the $(20,11)$ Golay based code (which gives the best performance among very short length codes from \figref{len20}) in orthogonal multiple access, 
each user gets  a code rate of $11/20$ and achieves individual BLER of $10^{-4}$ at $E_b/N_0$ of around $5.5$ dB. Using our $(127,63)$ SSC scheme (Gold code dictionary matrix with $K=5$) shown in \figref{len128}, we can transmit a total of $63$ bits using $127$ real dimensions to (up to) 5 different users in MAC or BC or IC. 
If all the receivers in the multi-user channel have the same AWGN variance, our scheme will get an \emph{overall} (accounting all the users together) code rate of $63/127$  
and achieve overall BLER of $10^{-4}$ at overall $E_b/N_0$ of $5$ dB. Hence, our scheme used in multi-user channels gives $0.5$ dB gain over one of the best codes for very short block lengths.

\section{Concluding Remarks} \label{sec:con}
In this Section, we present the limitations of the proposed encoding techniques, various directions for future work and give final remarks on our work.

\subsection{Limitations of the Sparse Coding Techniques}

1) The SSC-MAD scheme can have non-zero probability of error, even in the absence of noise. This is because, each active column \emph{interferes} with 
the other active columns, causing the MAD decoder to select an inactive column. Even when $K < \frac{1}{2}(\mu^{-1} + 1)$, MAD decoder can make error in decoding 
the constellation symbols $\{\beta_i\}$ for $M>1$, which in turn causes error in recovering the support set $\mc{S}$.


2) SSC scheme can have a large alphabet size. 
When the block size $N$ is a power of $2$, the  complex MUB matrices have alphabet size of $4$, as noted in \secref{dic}. With sparsity level $K$ encoding and $M$-ary
constellation symbols, the alphabet size of SSC codeword \eqref{cw1} can be up to $4KM$. On the other hand, the binary linear codes have alphabet size of $2$. 
However, alphabet size may not be a major issue in OFDM based communication systems. Even when the input is binary, 
the output of the FFT block in OFDM transmitter resembles Gaussian symbols (for large FFT sizes) \cite{jiang2008overview}.  

3) From \lemref{rec}, we infer that the sparsity level should be of the order of  $\frac{1}{\mu}$ for good recovery performance. With block length $N$, the mutual coherence of MUB and Gold code dictionary matrices are of the order of $\frac{1}{\sqrt{N}}$. 
With the sparsity level $K = \gamma \sqrt{N}$ for some fixed $\gamma$, the bits per dimension 
of the SSC and SC schemes ($N_b/N$) converge to $0$ as $N \rightarrow \infty$. Hence, our sparse coding based schemes are not suitable 
for large block lengths with high code rates.

\subsection{Directions for Future Work}

1) MAD decoding involves computation of inner products of the observation vector with all the columns of dictionary matrix. Efficient ways to compute these 
inner products for MUB and Gold code dictionary matrices can be explored in future.

2) Sufficient conditions for successful support recovery for OMP algorithm in the presence of noise, exist in the literature \cite{cai2011orthogonal}. Some of these conditions 
can be modified to give bounds on the error performance of MAD algorithm, especially, when the constellation size $M=1$. On the other hand, obtaining tight
bounds on the probability of block error for MAD decoder in Gaussian noise is a challenging problem, especially, for the constellation size $M>1$.

3) The SSC scheme has a penalty in the total number of bits $N_b$  transmitted in a block \eqref{nb} when compared to the sparse coding scheme \eqref{nb1},
for $K>1$. Other simple and efficient ways to map the information bits to the support set which give higher number of bits than SSC scheme can be developed in  a future work.

4) MAD algorithm is inspired by the matching pursuit based sparse signal recovery algorithms. There are also convex programming based algorithms for sparse signal 
recovery, such as $\ell_1$ norm minimization \cite{chen2001atomic} and dantzig selector \cite{candes2007dantzig}. To solve these convex programs, there are several iterative techniques available in the literature, such as ISTA \cite{daubechies2004iterative}, FISTA \cite{beck2009fast} and AMP \cite{maleki2011approximate}. We can study the decoding performance of the 
SSC scheme with these other sparse signal recovery algorithms. In addition, using the deep unfolding principle, 
some of the iterative sparse signal recovery algorithms such as AMP, have been implemented using
deep learning architectures \cite{metzler2017learned}. Such deep learning architectures can be employed at the receiver to decode the SSC scheme.

5) As discussed in \secref{mus}, there are several avenues for further investigation of sparse coding based techniques in multi-user scenarios, especially, when 
there is asymmetry in the channel conditions, power constraints and rate requirements among users.

6) We can explore other options for the dictionary matrix construction. 
There are constructions available in the literature for real MUB matrices under some conditions 
on the block length $N$ \cite{cameron1991quadratic},
which give a real (binary) dictionary matrix with $\frac{N^2}{2}$ columns and mutual coherence $\frac{1}{\sqrt{N}}$. 
Real MUB occupies only half the number of real dimensions for the same block length $N$ when compared to the complex MUB. So real MUB will achieve the same code rate (bpd) using a smaller sparsity level $K$ and hence its BLER performance will be better than the corresponding complex MUB.  There are also approximate MUB \cite{amub} constructions, which can give dictionary matrix with more than $N^2$ columns 
at the expense of  mutual coherence being higher than $\frac{1}{\sqrt{N}}$. Other quantum designs such as SIC-POVM, approximate SIC-POVM and Zadoff-Chu sequences from CDMA can also be used for obtaining the dictionary matrix.
 
7) Extensions of SSC encoder and MAD decoder for frequency selective channels and single/multi-user MIMO channels can be 
explored in future.

8) Applicability of sparse coding based techniques for the storage systems can be explored in a future work. 
Note that, with MUB dictionary matrix and sparsity level $K=1$, the alphabets of the codewords are from a (rotated) QPSK constellation 
and hence can be represented using 2 bits. However, the sparse coding alphabet size increases with $K$, making it unsuitable for storage
when $K$ is large. 

9) Communications with sparse signal based encoding and matching/correlation based decoding is very simple and intuitive. 
Using this sparse signal based communication to model/understand some of the naturally occurring communications, 
for instance, communications based on neurological signals, can be explored in a future work.

\subsection{Final Remarks}
We proposed SSC encoding scheme and MAD decoding scheme for communications, inspired by sparse signal processing. Our proposed technique is a non-linear coding scheme, 
whose implementation is easy to understand. With MUB and Gold code dictionary matrices, the proposed scheme gives competitive performance when compared to some of the commonly used linear codes, for small block lengths and low code rates. Unlike linear codes, our proposed sparse coding based techniques extend neatly to multi-user scenarios. Our schemes can be straightforwardly used in applications where there are several users with small number of information bits to transmit-to and/or receive-from. Such applications can include communicating control information to/from several users in a cellular network or in a vehicular communication system or communications in internet of things (IoT) applications.

\section*{Acknowledgment}
We would like to thank our colleague Pradeep Sarvepalli for introducing us to the delightful world of mutually unbiased bases.

\bibliographystyle{ieeetr}
\bibliography{ref}

\end{document}